\begin{document}
%
\title{Resource allocation in OFDMA networks with half-duplex and imperfect full-duplex users
}

\author{\IEEEauthorblockN{Peyman Tehrani\IEEEauthorrefmark{1},
Farshad Lahouti\IEEEauthorrefmark{1}\IEEEauthorrefmark{2},
Michele Zorzi\IEEEauthorrefmark{3}}

\IEEEauthorblockA{\IEEEauthorrefmark{1}School of Electrical and Computer Engineering,
College of Engineering, University of Tehran, Iran}
\IEEEauthorblockA{\IEEEauthorrefmark{2}Electrical Engineering Department, California Institute of Technology, USA}
\IEEEauthorblockA{\IEEEauthorrefmark{3}Department of Information Engineering, University of Padova, Italy \\
Emails: peymantehrani@ut.ac.ir, lahouti@caltech.edu.com or lahouti@ut.ac.ir, zorzi@dei.unipd.it}
}


\maketitle

\begin{abstract}
Recent studies indicate the feasibility of in-band full-duplex (FD) wireless communications, where a wireless radio transmits and receives simultaneously in the same band. Due to its potential to increase the capacity, analyzing the performance of a cellular network that contains full-duplex devices is crucial. In this paper, we consider maximizing the weighted sum-rate of downlink and uplink of a single cell OFDMA network which consists of an imperfect FD base-station (BS) and a mixture of half-duplex and imperfect full-duplex mobile users. To this end, the joint problem of sub-channel assignment and power allocation is investigated and a two-step solution is proposed. A heuristic algorithm to allocate each sub-channel to a pair of downlink and uplink users with polynomial complexity is presented. The power allocation problem is convexified based on the difference of two concave functions approach, for which an iterative solution is obtained. Simulation results demonstrate that when all the users and the BS are perfect FD nodes the network throughput could be doubled, Otherwise, the performance improvement is limited by the inter-node interference and the self-interference. We also investigate the effect of the self-interference cancellation capability and the percentage of FD users on the network performance in both indoor and outdoor scenarios.

\textbf{\textit{Index Terms}}: Full-duplex, self-interference, resource allocation, OFDMA.
\end{abstract}

\IEEEpeerreviewmaketitle

\section{Introduction}

In wireless communications, separation of transmission and reception in time or frequency has been the standard practice so far. However, through simultaneous transmission and reception in the same frequency band, wireless full-duplex has the potential to double the spectral efficiency. Due to this substantial gain, full-duplex technology has recently attracted noticeable interest in both academic and industrial worlds. The main challenge in full-duplex (FD) bidirectional communication is self-interference (SI) cancellation. In recent years, many attempts have been made to cancel the self-interference signal. In \citep{bharadia2013full}, it is shown that $110$ dB SI cancellation is achievable, and by jointly exploiting analog and digital techniques, SI may be reduced to the noise floor.

A full-duplex physical layer in cellular communications calls for a re-design of higher layers of the protocol stack, including scheduling and resource allocation algorithms. In \cite{goyal2013analyzing}, the performance of an FD-based cellular system is investigated and an analytic model to derive the average uplink and downlink channel rate is provided. A resource allocation problem for an FD heterogeneous OFDMA network is considered in \cite{sultanmode}, in which the macro BS and small cell access points operate in either  FD or HD MIMO mode, and all mobile nodes operate in HD single antenna mode. In \cite{di2014radio}, using matching theory, a sub-channel allocation algorithm for an FD OFDMA network is proposed. In both \cite{sultanmode} and \cite{di2014radio} only a single sub-channel is assigned to each of the uplink users in which they transmit with constant power. Resource allocation solutions are proposed in \cite{nam2015joint} and \cite{namradio} for FD OFDMA networks with perfect FD nodes (SI is canceled perfectly).
 
In this paper, we consider a general resource allocation problem in an OFDMA-based network consisting of an imperfect FD BS and both HD and imperfect FD users. We aim to maximize the weighted sum-rate of this network in the uplink and the downlink by joint sub-channel assignment and power allocation. To be more realistic, imperfect SI cancellation in FD devices is assumed and FD nodes suffer from their SI. Since our model is general in the sense that we assign both uplink and downlink weights to the users, HD users are allowed to transmit and receive in different sub-channels. Hence, when a node is only an uplink (downlink) user, its downlink (uplink) weight is set to zero. A contribution of our work is to consider the presence of a mixture of FD and HD users, which enables us to quantify the percentage of FD users needed to capture the full potential of FD technology in wireless OFDMA networks. We also investigate the effect of the SI cancellation level on the network performance, which has never been considered in related works.

The remainder of this paper is organized as follows. In Section \ref{system_model}, the system model is given and the optimization problem is formulated. In Section \ref{subchannel allocation}, a sub-channel allocation algorithm for selecting the best pair in each sub-channel is presented. Power allocation is considered in Section \ref{power_allocation}. Numerical results for the proposed resource allocation methods are shown in Section \ref{simulation_results}. Finally, the paper is concluded in Section \ref{conclusion}.

\section{System Model and Problem Statement} \label{system_model}
We consider a single cell network that consists of a full-duplex base-station (BS) and a total of $K$ half-duplex and full-duplex users. For communications between the nodes and the BS, we assume that an OFDMA system with $N$ sub-channels is used. All sub-carriers are assumed to be perfectly synchronized, and so there is no interference between different sub-channels. Since the base-station operates in full-duplex mode, it can transmit and receive simultaneously in each sub-channel. In each timeslot the base-station is to properly allocate the sub-channels to the downlink or uplink of appropriate users and also determine the associated transmission power in an optimized manner. We assume that the base-station and the FD users are imperfect full-duplex nodes that suffer from self-interference. We define a self-interference cancellation coefficient to take this into account in our model and denote it by $0\leq\beta\leq1$, where $\beta=0$  indicates that SI is canceled perfectly and $\beta=1$  means no SI cancellation. For simplicity, we assume the same self-interference cancellation coefficient for BS and FD users, but consideration of different coefficients would be possible. In this paper, the goal is to maximize the weighted sum-rate of downlink and uplink users with a total power constraint at the base-station and a transmission power constraint for each user.

\begin{figure}
  \includegraphics[width=\linewidth, height=4cm]{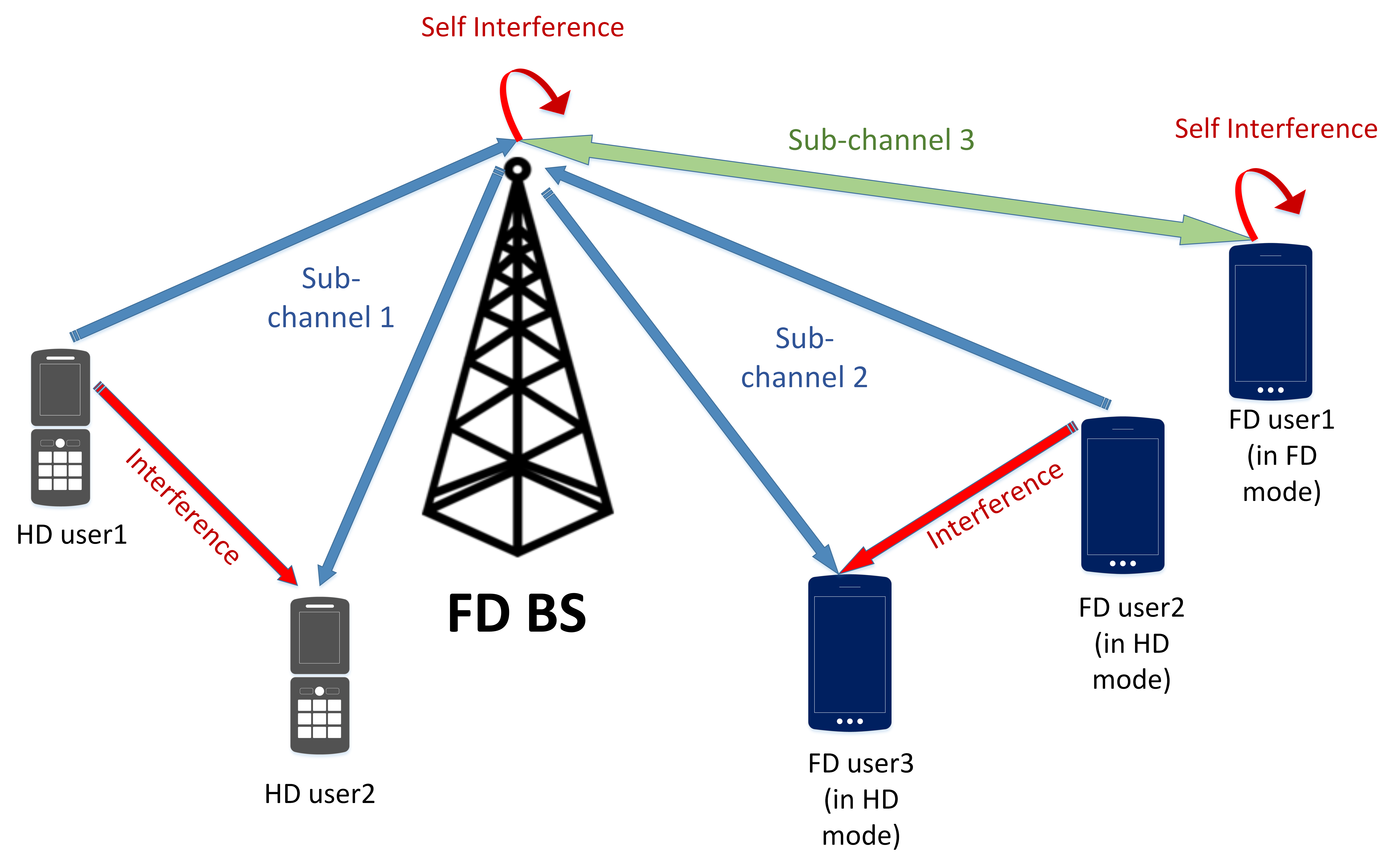}
  \caption{A single cell OFDMA full-duplex network that contains an imperfect full-duplex base-station and multiple half-duplex and full-duplex mobile nodes. Due to the full-duplex nature of this network, the base-station suffers from its self interference, and the uplink nodes cause interference to their co-channel downlink nodes}
\end{figure}

 We define the downlink weighted sum-rate as
 \vspace{-3mm}
\begin{equation}\label{DLRateQ}
R_d=\sum_{k=1}^{K}\sum_{n\in S_{k,d}}w_k\log \bigg(1+ \frac{g_k(n)p_{k,d}(n)}{N_k+I_{k,j}(n)p_{j,u}(n)} \bigg)
\end{equation}
And the uplink weighted sum-rate as 
\begin{equation}
R_u=\sum_{j=1}^{K}\sum_{n\in S_{j,u}}v_j\log \bigg(1+ \frac{g_j(n)p_{j,u}(n)}{N_0+\beta p_{k,d}(n)}\bigg)
\end{equation}
The variables used in the above equations are introduced in Table I. We assume here that the channel is reciprocal, i.e., uplink and downlink channel gains are the same. We further assume that the receiver noise powers in different sub-channels are the same. The term $I_{k,j}(n)p_{j,u}(n)$ in \eqref{DLRateQ} denotes the interference: When user $k$ is a FD device and both downlink and uplink of sub-channel $n$ are allocated to it $(j=k)$, $I_{k,j}(n)=\beta$, else $I_{k,j}(n)=g_{k,j}(n)$ is the channel gain between uplink user $j$ and downlink user $k$. We assume that the base-station knows all the channel gains, noise powers, the SI cancellation coefficient and weights assigned to the downlink and uplink of all users. 

Let $P_0$ and $P_k$ denote the maximum available transmit power for the base-station and for user $k$, respectively. Then the proposed design optimization problem, denoted by \textbf{\textit{P}1}, can be formulated as follows

\begin{table}[t!]
 \caption{Main Parameters and Variables}
 \begin{tabular}{|c | m{24em}|}

 \hline
 $w_k$ & weight assigned to the downlink  of user $k$ \\
 \hline
 $v_k$ & weight assigned to the uplink of user $k$ \\
 \hline
  $p_{k,d}(n)$ &  transmission power from BS to user $k$ on sub-channel $n$ \\
 \hline
  $p_{j,u}(n)$ & transmission power from user $j$ to BS on sub-channel $n$ \\
 \hline
  $ N_k$  & Gaussian noise variance at the receiver of user $k$ \\
 \hline
  $N_0$ & Gaussian noise variance at the base-station receiver\\
 \hline
  $S_{k,d}$ & set of sub-channels allocated to user $k$ for downlink transmission\\
 \hline
  $S_{j,u}$ & set of sub-channels allocated to user $j$ for uplink transmission \\
 \hline
  $\beta$ & self-interference cancellation coefficient \\
 \hline
 $g_{k}(n)$ & channel gain between BS and user $k$ on sub-channel $n$ \\
 \hline
 $g_{k,j}(n)$ & channel gain between user $j$ and $k$ on sub-channel $n$ \\
 \hline
 $I_{k,j}(n)$ & equal to $\beta$ when $j=k$, and to $g_{k,j}(n)$ otherwise\\
 \hline
 $P_0$ & maximum available transmit power at BS\\
 \hline
 $P_k$ & maximum available transmit power at user $k$ \\
 \hline
 \end{tabular}
\end{table}

\begin{align}
\textbf{\textit{P}1}: \operatorname*{maximize}_{p_{k,d},p_{j,u},S_{j,u},S_{k,d}} \qquad R_d+R_u\\
\text{subject to }   \sum_{k=1}^{K}\sum_{n\in S_{k,d}}p_{k,d}(n) \leq P_0\\
\sum_{n\in S_{j,u}}p_{j,u}(n) \leq P_j \quad \forall j\\
p_{j,u}(n),p_{k,d}(n)\geq 0 \quad \forall j,k,n\\
 S_{i,d}\cap S_{j,d}=\phi \quad \forall i\neq j\\
 S_{i,u}\cap S_{j,u}=\phi \quad \forall i\neq j\\
\cup_{j=1}^{K}  \ S_{j,u} \subseteq \{1,2,...,N\} \\
\cup_{k=1}^{K} \ S_{k,d} \subseteq \{1,2,...,N\} \\
S_{k,u} \cap S_{k,d}=\phi \quad \text{if user} \ k \  \text{is} \ \text{HD}
\end{align}
where (4) and (5) indicate the power constraint on the BS and the users, respectively. Constraint (6) shows the non-negativity feature of powers; (7) and (8) come from the fact that a sub-channel cannot be allocated to two distinct users simultaneously; (9) and (10) indicate that we have no more than $N$ sub-channels, and the last constraint accounts for the half-duplex nature of the HD users.

The general resource allocation problem presented is combinatorial in nature because of the channel allocation issue and addressing it together with power allocation in an optimal manner is challenging, especially as the number of users and sub-channels grow. Moreover, the non-convexity of the rate function makes the power allocation problem itself challenging even for a fixed sub-channel assignment. Here, we invoke a two step approximate solution. First, we  determine the allocation of downlink and uplink sub-channels to users and then determine the transmit power of the users and the base-station on their allocated sub-channels. In other words, we first specify the sets $S_{k,d}$ and $S_{j,u}$ and then determine the variables $p_{j,u}(n)$, $p_{k,d}(n)$. In the next Section, we introduce our sub-channel allocation algorithm. 

\section{Sub-channel Allocation} \label{subchannel allocation}
 
The sub-channel allocation problem, denoted by \textbf{\textit{P}2}, can be formulated as follows 
\begin{align*}
\textbf{\textit{P}2}: & \operatorname*{maximize}_{S_{j,u},S_{k,d}} \qquad &R_d+R_u\\
& \text{subject to} \qquad &\text{(7)-(11)}
\end{align*}
To solve the problem \textbf{\textit{P}2}, we should first solve the following power allocation problem, denoted by \textbf{\textit{P}3}, to maximize the weighted sum-rate in a single sub-channel and for a fixed pair of uplink and downlink users. Since a single sub-channel is being considered in \textbf{\textit{P}3}, we have dropped the variable $n$ in the notation.
\begin{align*}
\textbf{\textit{P}3}: \operatorname*{max}_{p_{k,d},p_{j,u}} L(p_{k,d},p_{j,u})=&w_k\log(1+\frac{g_kp_{k,d}}{N_k+I_{k,j}p_{j,u}}) 
\\
+&v_j\log(1+ \frac{g_jp_{j,u}}{N_0+\beta p_{k,d}})
\end{align*}
\begin{align}
&0\leq p_{k,d}\leq P_{max1} \\
&0\leq p_{j,u}\leq P_{max2}
\end{align}
Here, $P_{max1}$ and $P_{max2}$ are the maximum allowable transmit powers.
\newtheorem{prop}{Proposition}
\begin{prop}
For a fixed downlink user $k$ and uplink user $j$, the optimal pair of powers $(p_{k,d}^*,p_{j,u}^*)$ that optimizes \textbf{\textit{P}3} belongs to the following set.
{\small
\begin{align*}
\textbf{S}=\lbrace(0,P_{max2}),(P_{max1},0),(P_{max1},P_{max2}),&(p_{k,d}^a,P_{max2}), \\
&(P_{max1},p_{j,u}^a)\rbrace
\end{align*}
{\normalsize where} 
\begin{align}
p_{k,d}^a=\frac{-B-\sqrt{B^2-4AC}}{2A} , p_{j,u}^a=\frac{-E-\sqrt{E^2-4DF}}{2D}
\end{align}
{\normalsize and}  
\begin{align}
A&=w_k g_k \beta^2 \\
B&=2w_kN_0g_k\beta+(w_k-v_j)\beta g_kg_jp_{j,u} \\
\nonumber C&=w_kg_kN_0^2+w_kg_kg_jN_0p_{j,u}-v_jN_kg_jp_{j,u}\beta \\
 &-v_jg_j\beta I_{k,j}p_{j,u}^2 \\
D&=v_j g_j I_{k,j}^2 \\
E&=2v_jN_kg_jI_{k,j}+(v_j-w_k)I_{k,j} g_kg_jp_{k,d} \\
\nonumber F&=v_jg_jN_k^2+v_jg_kg_jN_kp_{k,d}-w_kN_0g_kp_{k,d}I_{k,j} \\
&-w_kg_k\beta I_{k,j}p_{k,d}^2 
\end{align}
}
\end{prop}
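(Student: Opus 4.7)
The strategy is to enumerate the stationary points of $L$ on the box $[0, P_{max1}] \times [0, P_{max2}]$ via the KKT conditions, with each of $p_{k,d}^{*}$ and $p_{j,u}^{*}$ lying either at $0$, strictly interior, or at the upper bound. This yields nine candidate configurations; the core difficulty will be to rule out an interior--interior critical point so that only the five points in $\textbf{S}$ survive.

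I would begin by computing
\begin{equation*}
\frac{\partial L}{\partial p_{k,d}} = \frac{w_k g_k}{N_k + I_{k,j} p_{j,u} + g_k p_{k,d}} - \frac{v_j g_j \beta p_{j,u}}{(N_0 + \beta p_{k,d})(N_0 + \beta p_{k,d} + g_j p_{j,u})}
\end{equation*}
and the analogous expression for $\partial L/\partial p_{j,u}$. Setting $\partial L/\partial p_{k,d}=0$ and clearing denominators produces a quadratic in $p_{k,d}$ whose coefficients match $A,B,C$ in (15)--(18); likewise $\partial L/\partial p_{j,u}=0$ gives a quadratic in $p_{j,u}$ with coefficients $D,E,F$ in (19)--(22). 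Since the leading coefficients $A,D$ are strictly positive, the sign of each partial equals the sign of the corresponding quadratic (positive outside the roots, negative between), so each single-variable slice of $L$ is strictly increasing--decreasing--increasing. The local maximum in the $p_{k,d}$ slice is therefore the smaller root $p_{k,d}^a=(-B-\sqrt{B^2-4AC})/(2A)$, and $p_{j,u}^a$ is obtained symmetrically.

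Next I would dispose of the degenerate KKT cases. The origin $(0,0)$ gives $L=0$ and is dominated by any other candidate. If $p_{k,d}^{*}=0$ while $0<p_{j,u}^{*}<P_{max2}$, KKT requires $\partial L/\partial p_{j,u}|_{(0,p_{j,u}^{*})}=0$; but $L(0,p_{j,u})=v_j\log(1+g_j p_{j,u}/N_0)$ is strictly increasing in $p_{j,u}$, a contradiction. The symmetric case with $p_{j,u}^{*}=0$ and $p_{k,d}^{*}$ interior is excluded identically.

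The main obstacle is to exclude an interior--interior critical point, where both partials vanish. My plan is to multiply the two stationarity identities
\begin{equation*}
\frac{w_k g_k}{a} = \frac{v_j g_j \beta p_{j,u}}{b c}, \qquad \frac{v_j g_j}{c} = \frac{w_k g_k I_{k,j} p_{k,d}}{a\,a_0},
\end{equation*}
where $a=N_k+I_{k,j}p_{j,u}+g_k p_{k,d}$, $a_0=N_k+I_{k,j}p_{j,u}$, $b=N_0+\beta p_{k,d}$, and $c=N_0+\beta p_{k,d}+g_j p_{j,u}$. The common factors $w_k v_j g_k g_j$ and $a c$ cancel, leaving $a_0\, b = \beta I_{k,j} p_{k,d} p_{j,u}$. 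Expanding the left-hand side and cancelling the common term $\beta I_{k,j} p_{k,d} p_{j,u}$ on both sides reduces this to $N_k N_0 + N_k \beta p_{k,d} + N_0 I_{k,j} p_{j,u}=0$, which is impossible since every summand is non-negative and $N_k N_0>0$. Hence no interior--interior critical point is feasible, and the remaining KKT candidates are precisely the five elements of $\textbf{S}$.
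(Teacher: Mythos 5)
Your proof is correct, and its backbone coincides with the paper's: both reduce $\partial L/\partial p_{k,d}=0$ to the quadratic $Ap_{k,d}^{2}+Bp_{k,d}+C=0$ (and symmetrically for $p_{j,u}$), note that the sign of the partial derivative equals the sign of the quadratic so that only the smaller root can be a local maximum of a coordinate slice, and discard $(0,p_{j,u}^{a})$, $(p_{k,d}^{a},0)$ and the origin by monotonicity or domination. Where you genuinely diverge is in excluding a jointly interior critical point. The paper handles this implicitly through the coefficient structure: $B\geq 0$ whenever $w_k\geq v_j$ and $E\geq 0$ whenever $v_j\geq w_k$, so at least one of $B,E$ is nonnegative, hence at least one coordinate's slice maximum must lie at an endpoint of its interval; this is also what produces the case split into $P_{opt1}$ and $P_{opt2}$ that the algorithm uses. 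Your route instead multiplies the two stationarity identities, cancels the common factor $w_kv_jg_kg_j/(ac)$, and arrives at $(N_k+I_{k,j}p_{j,u})(N_0+\beta p_{k,d})=\beta I_{k,j}p_{k,d}p_{j,u}$, i.e., $N_kN_0+N_k\beta p_{k,d}+N_0I_{k,j}p_{j,u}=0$, which is impossible for positive noise powers. This is a clean, self-contained exclusion that does not rely on comparing $w_k$ with $v_j$ and would remain valid even if $B$ and $E$ could be simultaneously negative; the paper's argument, in exchange, yields the sharper by-product that which interior candidate is relevant is governed by the sign of $w_k-v_j$. Both arguments tacitly assume $w_k,v_j,g_k,g_j>0$ and nondegenerate leading coefficients ($\beta>0$, $I_{k,j}>0$); in the degenerate cases the corresponding partials are single-signed and the boundary conclusion holds trivially, so no gap results.
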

\begin{proof}
Computing the derivative with respect to $p_{k,d}$ and setting it to zero we have:
\begin{equation*}
 \frac{\partial L}{\partial p_{k,d}}=0\Longrightarrow Ap_{k,d}^2+Bp_{k,d}+C=0
\end{equation*}
where $A$, $B$ and $C$ are defined above.
It is evident that $A\geq 0$, and if $w_k\geq v_j$ then $B\geq 0$. When $A,B\geq 0$ the  above quadratic equation  either  has  no zeros in $[0,P_{max1}]$ or has only one zero where the function changes sign from $-$ to $+$ indicating a local minimum for $L$. Therefore, in both cases the maximum is attained at a boundary point $0$ or $P_{max1}$. But when $w_k \leq v_j$, $B$ could be negative, and the smaller root of the quadratic equation $p_{k,d}^a$ could be positive. In this case, the maximum is attained  at $P_{max1}$ or $p_{k,d}^a$. By similar analysis for $p_{j,u}$ one sees that if $v_j\geq w_k$ then the maximum is attained at a boundary point $0$ or $P_{max2}$ and when $w_k\geq v_j$ the maximum is attained at $P_{max2}$ or $p_{j,u}^a$.
As a result, when $B\geq0$ the optimal transmission powers belong to the 
following set,
\small{
\begin{equation*}
P_{opt1}=\left\{(0,P_{max2}),(P_{max1},0),(P_{max1},P_{max2}),(P_{max1},p_{j,u}^a)\right\}.
\end{equation*}
}
\normalsize
Otherwise, if $B<0$, they belong to the set below
{\small 
\begin{equation*}
P_{opt2}=\left\{(0,P_{max2}),(P_{max1},0),(P_{max1},P_{max2}),(p_{k,d}^a,P_{max2})\right\}.
\end{equation*}
}
The cases {\small$(0,p_{j,u}^a)$} and {\small$(p_{k,d}^a,0)$} cannot be the optimal solutions of  \textbf{\textit{P}3} , because they are dominated by {\small$(0,P_{max2})$} and {\small$(P_{max1},0)$} which give a larger $L$.
\normalsize
Therefore, optimal powers could be found by checking the members of the set \textbf{S} and picking the one that corresponds to the largest $L$.
\end{proof}
Based on the above Proposition one can find the best uplink-downlink pair in each sub-channel by choosing the one with the largest value of $L$. This involves only $O(K^2)$ operations.
Now we can  present our sub-channel allocation algorithm to solve Problem \textbf{\textit{P}2}, in which we employ a sub-optimum power allocation scheme.
First, for each sub-channel $n$, we find the best channel gain among all users and denote it by $\tilde{g}({n})=\operatorname*{arg\,max}_k g_k(n) $. Then, we sort the sub-channels based on the value of $\tilde{g}({n})$. In other words. we find a sub-channel permutation $\{a_1,...,a_N\}$ such that
$\tilde{g}({a_1}) \geq \tilde{g}({a_2}) \geq . . .\geq \tilde{g}({a_N})$. Then, starting from sub-channel $a_1$, we seek $k$ and $j$ that maximize $L$. At the first iteration, we set $P_{max1}=P_0$ , $P_{max2}=P_k$ and for iteration $l\geq 2$ set $P_{max1}= \frac{P_0}{d_0(l)}$, $ \frac{P_k}{d_k(l)}$ where $d_0(l)$ and $d_k(l)$ indicate the number of sub-channels to be allocated to the BS's downlink transmission and to user $k$'s uplink transmission, respectively, in the $l$th iteration. The proposed sub-channel allocation algorithm is summarized below.

\begin{table}[h!]
 \begin{tabular}{m{30em}}
 \hline
 \textbf{Sub-channel Allocation Algorithm} \\
 \hline\hline
1.\textbf{for} $n=1$ to $N$ \textbf{do} \\ 
2.\quad $\tilde{g}({n})=\operatorname*{max}_k g_k(n) $
  \\
3.\textbf{end for} \\
4.Find a sub-channel permutation $\{ a_1,...,a_N \}$, $a_i\in\{1,...,N\}$, $a_i\neq a_j$ such that\\
\hspace{0.5cm} $\tilde{g}({a_1}) \geq \tilde{g}({a_2}) \geq . . .\geq \tilde{g}({a_N})$ \\
5. set $d_k(l)=1$  for $0\leq k \leq K$  and $1 \leq l \leq N$ \\
6.\textbf{for} $l=1$ to $N$ \textbf{do} \\
7.\quad Set $P_{max1}=\dfrac{P_0}{d_0(l)}$  and  $P_{max2}=\dfrac{P_k}{d_k(l)}      \forall k$  \\
8. \quad \textbf{for} $k=1$ to $K$ \textbf{do} \\
9. \quad \quad 	\textbf{for} $j=1$ to $K$ (if $k$ is an HD user $j\neq k$) \\
10.\quad \quad \quad In sub-channel $a(l)$ solve the problem  \textbf{\textit{P}3}  \\
11.\quad \quad \textbf{end for}
\\
12.\quad  \textbf{end for}
\\
13.\quad Using the obtained optimal powers, find the best pair $(k^*,j^*)$ in the \\
 \quad \quad sub-channel $a_l^*$ that  has the largest value of $L$
\\
14.\quad  $S_{j^*,u} \leftarrow [{S_{j^*,u},i}]$   ,   $S_{k^*,d} \leftarrow [{S_{k^*,d},i}]$ \\
15.\quad \textbf{if} $p_{k^*}\neq 0$ then $d_0(n)=d_0(n)+1$; \\
16.\quad \textbf{if} $p_{j^*}\neq 0$ then $d_{j^*}(n)=d_{j^*}(n)+1$; \\
17.\textbf{end for} \\
 \hline
 \end{tabular}
\end{table}
The complexity of finding the best user in each sub-channel is $O(K)$ and for $N$ sub-channels is $O(KN)$. Similarly, the complexity of finding the best pair in each sub-channel is $O(K^2)$ and doing so for $N$ sub-channels requires $O(NK^2)$ operations. Since the complexity of sorting $N$ values is $O(N \log N)$, then the overall computational complexity of the proposed sub-channel allocation algorithm is $O(N \log N+NK^2)$.
\section{Power Allocation}        \label{power_allocation}
The power allocation problem, denoted by \textbf{\textit{P}4}, can be formulated as follows 
\begin{align*}
\textbf{\textit{P}4}: & \operatorname*{maximize}_{p_{k,d},p_{j,u}} \qquad &R_d+R_u\\
& \text{subject to} \qquad &\text{(4)-(6)}
\end{align*}

Due to the interference terms, the power allocation problem is non-convex. Here, we use the ``difference of two concave 
functions/sets'' (DC) programming technique \cite{tuy2013convex} to convexify this problem. In this procedure, the non-concave 
objective function is expressed as the difference of two concave 
functions, and the discounted term is approximated by its first order 
Taylor series. Hence, the objective becomes concave and can 
be maximized by known convex optimization methods. 
This procedure runs iteratively, and after each iteration the 
optimal solution serves as an initial point for the next iteration 
until the improvement diminishes in iterations. In \cite{kha2012fast}, the DC approach is used to formulate optimized power allocation in a multiuser interference channel. Here, we rewrite the objective function of \textbf{\textit{P}4} in DC form as follows
\vspace{-2mm}
\begin{align*}
\operatorname*{max}_{\mathbf{p}} \quad 
f(\mathbf{p})-h(\mathbf{p}) 
\end{align*}

 \vspace{-9mm}
\small{
\begin{align*}
f(\mathbf{p})&=\sum_{k=1}^{K}\sum_{n\in S_{k,d}}w_k\log(N_k+I_{k,j}(n)p_{j,u}(n)+ g_k(n)p_{k,d}(n)) \\
&+\sum_{j=1}^{K}\sum_{n\in S_{j,u}}v_j\log(N_0+\beta p_{k,d}(n)+ g_j(n)p_{j,u}(n)) 
\end{align*}
\vspace{-7mm}
\begin{align*}
h(\mathbf{p})&=\sum_{k=1}^{K}\sum_{n\in S_{k,d}}w_k\log(N_k+I_{k,j}(n)p_{j,u}(n))\\
&+\sum_{j=1}^{K}\sum_{n\in S_{j,u}}v_j\log(N_0+\beta p_{k,d}(n)) 
\end{align*}
}
\normalsize
where  
\begin{align*}
\mathbf{p}=[p_{k_1,d}(1),...,p_{k_N,d}(N),p_{j_1,u}(1),,...,p_{j_N,u}(N)]^T 
\end{align*} is the downlink and uplink transmitted power vector, and $k_i$ and $j_i$ denote the uplink and downlink users that has been selected for the $i$th sub-channel after the sub-channel allocation phase. Now, the objective $f(\mathbf{p})-h(\mathbf{p})$ is a DC function. To write the Taylor series of the discounted function $h(\mathbf{p}$), we need its gradient, that can be easily derived as follows.
\small{
\begin{align*}
&\nabla h(\mathbf{p})= \bigg[\frac{u_{j_1}\beta}{\ln(2)}\frac{1}{N_0+\beta p_{k_1,d}(1))},. . . ,\frac{u_{j_N}\beta}{\ln(2)}\frac{N}{N_0+\beta p_{k_N,d}(N))}, \\ &\frac{w_{k_1}I_{k_1,j_1}(1)}{\ln(2)}\frac{1}{N_{k_1}+I_{k_1,j_1}(1)p_{j_1,u}(1))},..., \\
&\frac{w_{k_N}I_{k_N,j_N}(N)}{\ln(2)}\frac{1}{N_{k_N}+I_{k_N,j_N}(N)p_{j_N,u}(N))} \bigg] ^T 
\end{align*}
}

\normalsize
To make the problem convex, $h(\mathbf{p})$ is approximated with its first order approximation $h(\mathbf{p}^{(t)})+\nabla h^T(\mathbf{p}^{(t)})(\mathbf{p}-\mathbf{p}^{(t)})$ at point $\mathbf{p}^{(t)}$. We start from a feasible $\mathbf{p}^{(0)}$ at the first iteration, and $\mathbf{p}^{(t+1)}$ at the $t$th iteration is generated as the optimal solution of the following convex program
\begin{align*}
\mathbf{p}^{(t+1)}=\operatorname*{arg max}_{\mathbf{p}} \quad &f(\mathbf{p})- h(\mathbf{p}^{(t)})-\nabla h^T(\mathbf{p}^{(t)})(\mathbf{p}-\mathbf{p}^{(t)}) \\ 
&\text{subject to } \ (4)-(6)
\end{align*}
Since $h(\mathbf{p})$ is a concave function, its gradient is also its super gradient so we have
\begin{align*}
h(\mathbf{p})\leq h(\mathbf{p}^{(t)})+\nabla h^T(\mathbf{p}^{(t)})(\mathbf{p}-\mathbf{p}^{(t)}) 
\end{align*}
and we can deduce
\begin{align*}
h(\mathbf{p}^{(t+1)})\leq h(\mathbf{p}^{(t)})+\nabla h^T(\mathbf{p}^{(t)})(\mathbf{p}^{(t+1)}-\mathbf{p}^{(t)}). 
\end{align*}
Then it can be proved that in each iteration the solution of  problem \textbf{\textit{P}4} is improved as follows
\begin{align*}
&f(\mathbf{p}^{(t+1)})- h(\mathbf{p}^{(t+1)}) \geq \\ &f(\mathbf{p}^{(t+1)})- h(\mathbf{p}^{(t)})-\nabla h^T(\mathbf{p}^{(t)})(\mathbf{p}^{(t+1)}-\mathbf{p}^{(t)}) \\
&=\operatorname*{max}_{\mathbf{p}} \quad f(\mathbf{p})- h(\mathbf{p}^{(t)})-\nabla h^T(\mathbf{p}^{(t)})(\mathbf{p}-\mathbf{p}^{(t)}) \\
& \geq f(\mathbf{p}^{(t)})- h(\mathbf{p}^{(t)})-\nabla h^T(\mathbf{p}^{(t)})(\mathbf{p}^{(t)}-\mathbf{p}^{(t)}) \\
&=f(\mathbf{p}^{(t)})- h(\mathbf{p}^{(t)}).
\end{align*}
According to the above equations, the objective value after each iteration is either unchanged or improved and since the constraint set is compact it can be concluded that the above DC approach converges to a local maximum.

\section{Simulation Results}
\label{simulation_results}
In this Section, we evaluate the proposed resource allocation scheme for OFDMA networks with half-duplex and imperfect full-duplex nodes in indoor and outdoor scenarios.  We assume a time-slotted system, where nodes are uniformly distributed within a given cell radius. Table II presents the details of the indoor and outdoor simulation setup and channel models. In addition to the pathloss, a Rayleigh block fading channel model with unit average power is considered. The channel gains remain constant in each time slot and vary independently from one time slot to the next.  
For comparison, we consider six schemes: (i) An HD uplink system (HD-U), (ii) An HD downlink system (HD-D), (iii) a system that includes an FD BS and  HD users (FD-HD), (iv) a system that contains  an FD BS and FD users (FD-FD), (v) an upper bound which is HD uplink rate plus HD downlink rate; (vi) a Hybrid HD scheme (HHD), in which a hybrid HD BS could transmit data to downlink users and receive data from uplink users simultaneously in different sub-channels. 
For the HD-D case, each sub-channel is allocated to the user with the best weighted channel SNR, and multi-level water filling \cite{seong2006optimal} is applied  for power allocation. For the sub-channel assignment of the HD-U scheme the $SOA1 \: 4B \: 5A$ method presented in \cite{huang2009joint} is used, and for power allocation each user performs water filling in its dedicated sub-channels. In the HHD scheme, we use the proposed sub-channel allocation algorithm by changing the set $P_{opt}$ to:
 \begin{align*}
P_{opt}=\left\{(0,P_{max2}),(P_{max1},0)\right\}
\end{align*}
and perform multi-level water filling and water filling for the power allocation in the selected downlink and uplink sub-channels, respectively.

Fig. \ref{fig:convergence} illustrates the convergence of the proposed resource allocation scheme in an OFDMA network with 10 HD nodes and 10 imperfect FD nodes. As can be seen, the sum-rate converges in just a few iterations.

Fig. \ref{fig:Compare_Optimal} compares the proposed algorithm with the optimal exhaustive search solution. Due to the high computational complexity of exhaustive search, only a small network with one HD and one FD user and a small number of subchannels can be considered. Uplink and downlink weight vectors are assumed to be $\mathbf{u}=[1/3, 2/3]^{T}$ and $\mathbf{w}=[2/3, 1/3]^{T}$ respectively, and the SI cancellation coefficient is set to $\beta = -90$ dB. Simulation results show that, at least for small size networks, our proposed algorithm achieves the performance of the optimal exhaustive search.
\begin{figure}
  \includegraphics[width=\linewidth, height=4cm]{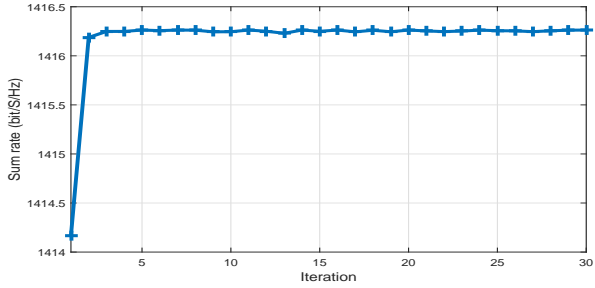}
  \caption{Convergence behavior of the proposed algorithm. Here $\beta=10^{-6}$ and other simulation parameters are the same as in the outdoor case}
    \label{fig:convergence}
\end{figure}

\begin{figure}[t!]
  \includegraphics[width=\linewidth, height=4cm]{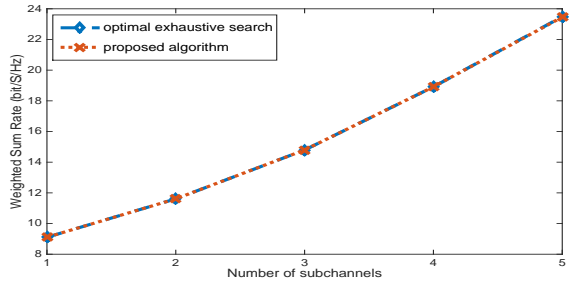}
  \caption{Performance comparison of the proposed algorithm with optimal exhaustive search in a small network}
    \label{fig:Compare_Optimal}
\end{figure}

\begin{figure}[t!]
  \includegraphics[width=\linewidth, height=4cm]{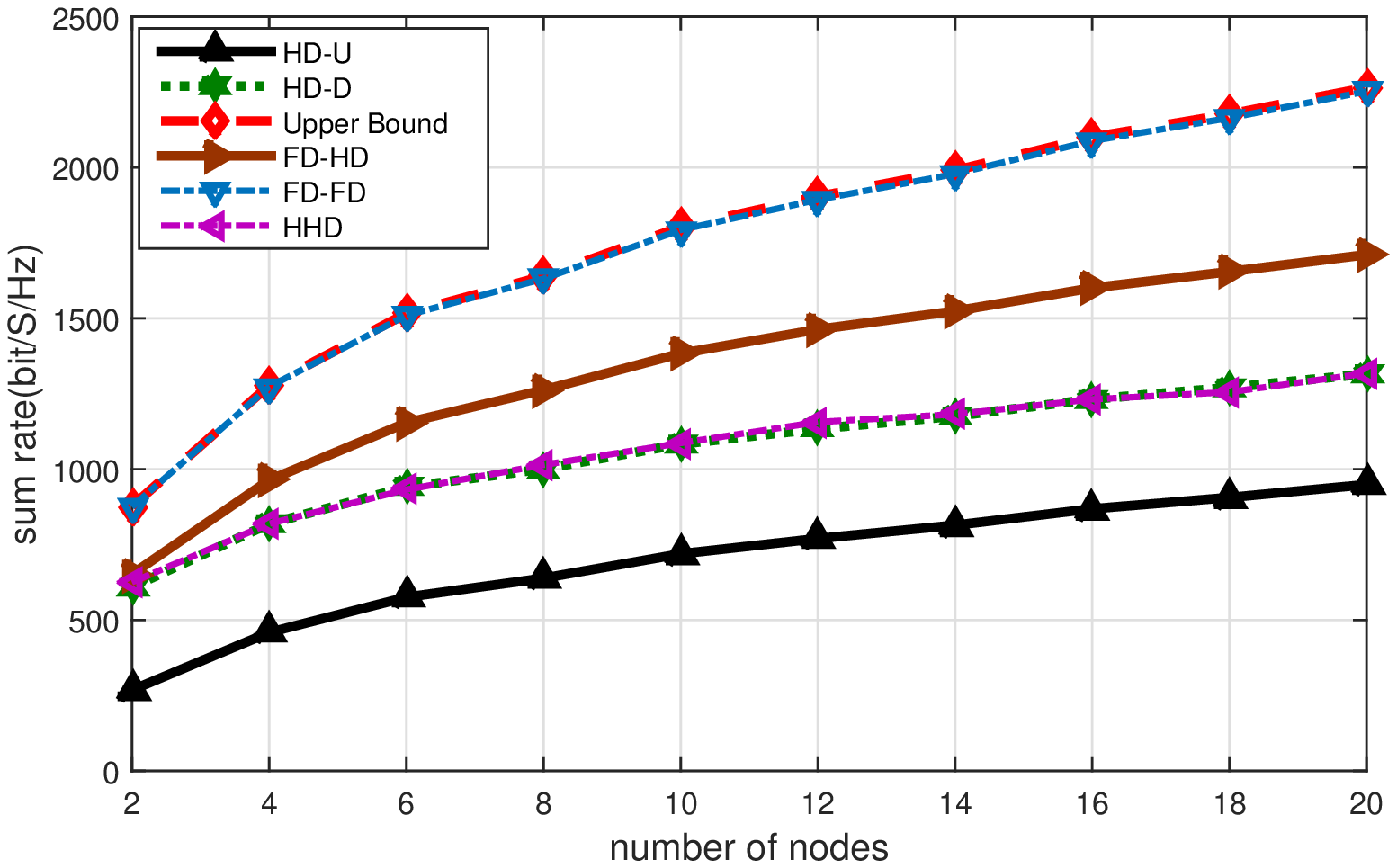}
  \caption{Performance comparison of six schemes in FD and HD networks in the outdoor scenario.}
    \label{fig:all_outdoor}
\end{figure}

\begin{figure}[t!]
  \includegraphics[width=\linewidth, height=4cm]{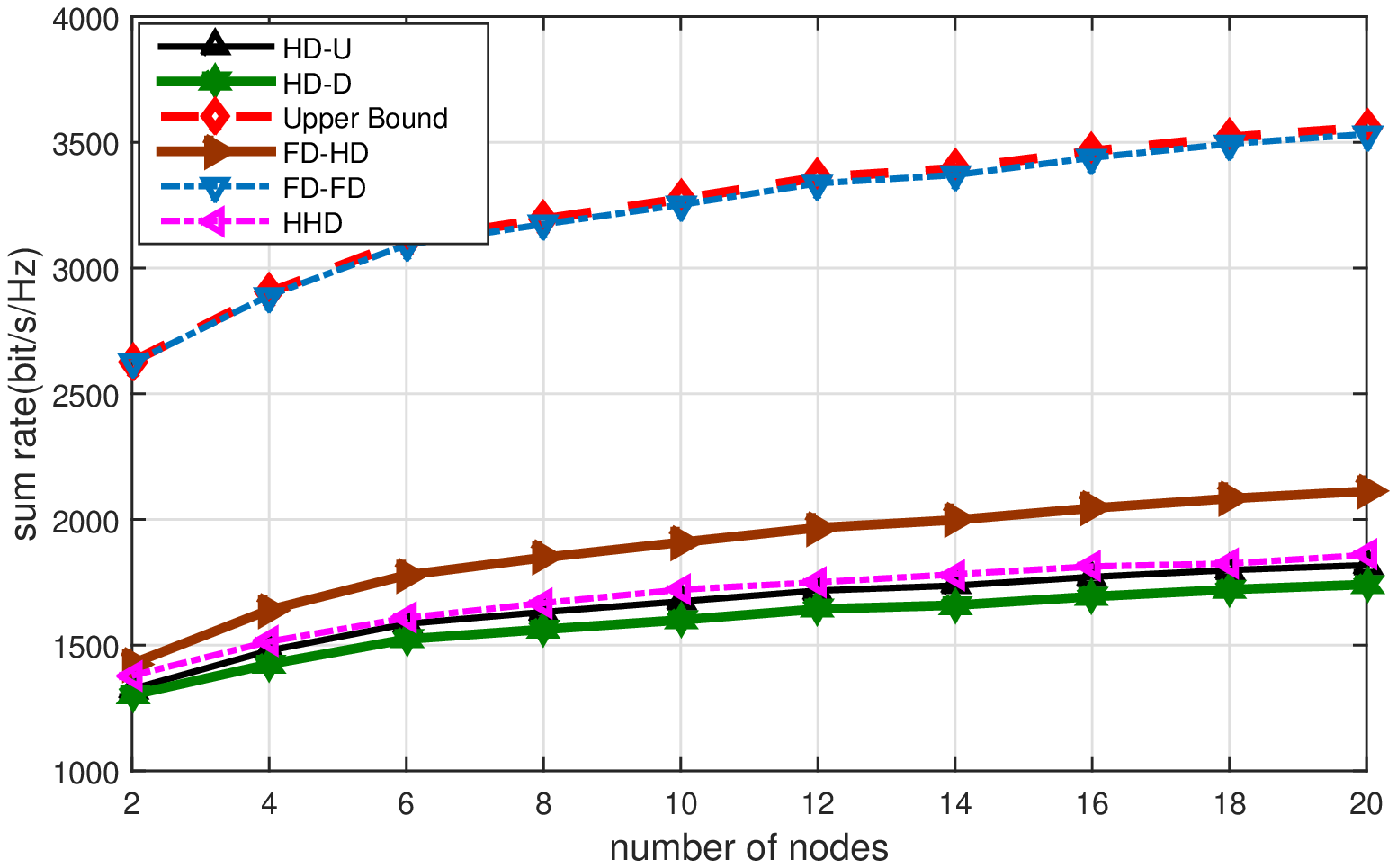}
  \caption{Performance comparison of six schemes in FD and HD networks in the indoor scenario.}
      \label{fig:all_indoor}
\end{figure}

\begin{table}[h!]
 \caption{simulation parameters}
 \begin{tabular}{|m{14em} | m{15em}|}

 \hline
 \textbf{PARAMETER} & \textbf{VALUE} \\
 \hline
 Maximum BS Power (outdoor) & $43$ dBm \\
 \hline
 Maximum BS Power (indoor) & $24$ dBm \\
 \hline
  Maximum UE Power $(P_k)$ & $23$ dBm \\
 \hline
  Thermal Noise Density & $-170$ dBm/Hz\\
 \hline
  Number of Sub-channels $N$ & $64$\\
 \hline
  Total Bandwidth & $10$ MHz \\
 \hline
  Sub-channel Bandwidth  & $150$ KHz \\
  \hline
  Cell Radius (outdoor) & $1$ km\\
   \hline
  Cell Radius (indoor) & $20$ m\\
 \hline
 Center Frequency & $2$ GHz\\
 \hline
 BS to UE Pathloss (outdoor) & urban Hata model with parameters                                  $h_m= 1.5$ m, $h_B= 30$ m \\   
 \hline
 UE to UE Pathloss (outdoor) & urban Hata model with parameters                                  $ h_m= 1.5$ m, $h_B= 1.5$ m \\
 \hline
  Pathloss Model (indoor) & ITU model for indoor attenuation with parameters N= $22$, $p_f(n)= 9$ \\   
 \hline
 \end{tabular}
\end{table}

Fig. \ref{fig:all_outdoor} shows the sum-rate of the different schemes in the outdoor scenario with perfect SI cancellation ($\beta=0$). It can be seen that when the BS and all nodes are perfect FD devices the upper-bound could be attained, and when the nodes are HD but the BS is FD the sum-rate is still bigger than the cases with HD BS, but it can not reach the upper-bound because of inter-node interference.

Fig. \ref{fig:all_indoor} shows the sum-rate of the six presented schemes in an indoor scenario. If we compare the outdoor 
and indoor scenarios we find that using an FD BS in an outdoor environment has much larger gain  than  using it in an indoor case. This result is  intuitive because in the outdoor environment the distances between  nodes are larger, and hence the inter-node interference is smaller. As a result, the FD BS could work in FD mode in more sub-channels, which helps increase the network throughput more significantly.

\begin{figure}[t!]
  \includegraphics[width=\linewidth, height=4cm]{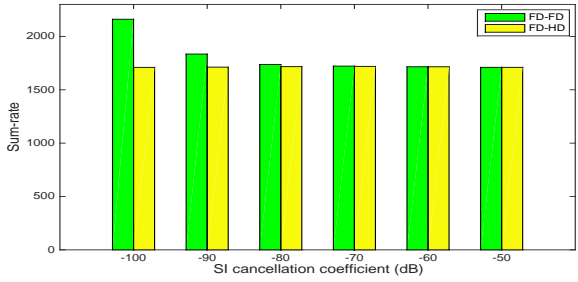}
  \caption{Effect of the self interference cancellation coefficient on the FD network capacity in the indoor scenario.}
        \label{fig:beta_threshold_indoor}
\end{figure}

 Fig. \ref{fig:beta_threshold_indoor} compares the sum-rates of an FD-FD network and an FD-HD network for different values of $\beta$ in the indoor scenario. It can be seen that when $\beta$ is larger than a specified threshold, which is near $-90$ dB, there is no difference between the sum-rate of the all HD user case and the sum-rate of the all FD user case. The reason is that when $\beta$ is large relative to the inter-node interference, FD users prefer to work in HD mode in order to increase their rate, hence the sum-rates of FD-FD and FD-HD become equal.  

In Fig. \ref{fig:beta_threshold_outdoor}, the same experiment is repeated for the outdoor scenario. Here the threshold $\beta$ is approximately $-110$ dB which is much smaller than in the indoor case. Since the inter-node interference in the outdoor environment is smaller, the SI cancellation coefficient should be very small to make the FD users work in the FD mode. Therefore, in the outdoor scenario a full-duplex user should be almost a perfect FD in order to be allowed to work in the FD mode.

Fig. \ref{fig:FDpercentage} shows the performance of a full-duplex OFDMA network with a mix of FD and HD users. A total of 20 users are considered, assuming perfect SI cancellation  for FD devices. It can be seen that increasing the percentage of FD users in an outdoor environment does not increase the total sum-rate significantly, but in the indoor case by equipping only $10\%$ of the nodes with FD technology the network throughput greatly increases. The reason behind this is the large inter-node interference in the indoor environment that could be avoided by using FD users instead of HD ones. 
\begin{figure}[t!]
  \includegraphics[width=\linewidth, height=4cm]{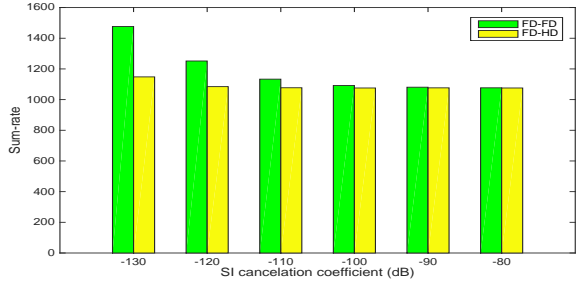}
  \caption{Effect of the self interference cancellation coefficient on the FD network capacity in the outdoor scenario}
          \label{fig:beta_threshold_outdoor}
\end{figure}
\begin{figure}[t!]
  \includegraphics[width=\linewidth, height=4cm]{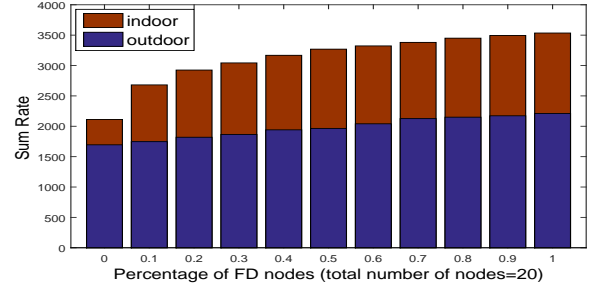}
  \caption{Effect of the fraction of FD nodes on the network throughput in both indoor and outdoor scenarios}
  \label{fig:FDpercentage}
\end{figure}

\section{conclusion}
\label{conclusion}
In order to fully exploit the advantages of FD technology in wireless networks, it is important to design an appropriate resource allocation algorithm. In this paper we considered a single cell OFDMA network that contains an FD BS and a mixture of HD and FD users, and also assumed that FD nodes are not  necessarily perfect FD devices and may suffer from residual self-interference. For this model, we proposed a sub-channel allocation algorithm and power allocation method and showed that when all users and the BS are perfect FD we can double the capacity. Otherwise, because of inter-node interference and self-interference the spectral efficiency gain is smaller, but we showed that even by using an imperfect FD BS in a network, the total capacity could  increase significantly. We also investigated FD operation in both outdoor and indoor scenarios and studied the effect of the self interference cancellation coefficient and of the percentage of FD users.

\bibliographystyle{IEEEtran}
{\small
\bibliography{bibfiles}}
\end{document}